\newtheorem{assumption}{Assumption}
\let\llncssubparagraph\subparagraph
\let\subparagraph\paragraph
\let\subparagraph\llncssubparagraph
\titleformat{\subsection}[runin]
    {\normalfont\bfseries}{}{0em}{}[.]
\definecolor{lightred}{RGB}{255,130,130}
\def\Byz{\ensuremath{f}}   
\def\ByzN{\ensuremath{x}}  
\def\R{\ensuremath{\mathsf{R}}}  
\def\L{\ensuremath{\mathsf{L}}}  
\setlist[description]{font=\normalfont\itshape\/, labelindent=0.5cm}
\title{BeRGeR: Byzantine-Robust Geometric Routing}
\author{Brown Zaz \and Mikhail Nesterenko \and Gokarna Sharma}
\institute{Department of Computer Science, Kent State University, Kent, OH 44242, USA\\
\email{\{zbrown, mikhail, sharma\}@cs.kent.edu}}
\date{}
\begin{document}
\sloppy
\maketitle
\thispagestyle{plain}
\pagestyle{plain}

\begin{abstract}
  We present BeRGeR: the first asynchronous geometric routing algorithm that guarantees delivery of a message despite a Byzantine fault without relying on cryptographic primitives or randomization. The communication graph is a planar embedding that remains three-connected if all edges intersecting the source-target line segment are removed.  We prove the algorithm correct and estimate its message complexity.
\end{abstract}

\section{Introduction}
Geometric, also called geographic, routing uses node locations to transmit a message from the source to the target node. Nodes may either obtain their coordinates from GPS or compute virtual coordinates~\cite{IN99,KV99,KV00,KSU99}. Geometric routing has several attractive features. Nodes do not need to store network topology information beyond their immediate neighbors. Moreover, such routing may be stateless as nodes do not have to retain information after forwarding a packet. Compared to flooding-based ad hoc routing algorithms~\cite{dsr,aodv}, geometric routing is resource efficient.

Geometric routing may therefore be used in cases where maintaining more extensive routing information is not practicable.
It may operate in environments with high topological volatility such as vehicular networks~\cite{vehicularNetworking} or large collections of resource-poor devices such as wireless sensor networks~\cite{lites}.

Despite the claim of hostile environment applicability, little research has been done on fortifying geometric routing against faults using geometric routing techniques themselves. In this paper, we address this issue.

\subsection{Geometric routing}
The simplest form of geometric routing is greedy. In greedy routing, the packet is forwarded to the neighbor with the closest Euclidean distance to the target. However, greedy routing fails if it reaches a local minimum. A local minimum is a node that does not have an immediate neighbor closer to the target. To recover from a local minimum and guarantee message delivery, packets are sent to traverse faces of a planarized subgraph~\cite{BMSU01}. Finding a maximum planar subgraph of a general graph is NP-hard~\cite{NPHARD}. However, for certain graphs, the task may be solved efficiently. A graph is unit-disk if a pair of its nodes \(u\) and \(v\) are neighbors if and only if the Euclidean distance between them is no more than \(1\). Such a graph approximates a wireless network. In a unit-disk graph, a connected planar subgraph may be found by local computation at every node using Relative Neighborhood or Gabriel Graph~\cite{GFG,GS69,GPSR,T80}.

A sequential geometric routing algorithm, such as the classic GFG/GPSR~\cite{GFG,GPSR}, routes a single packet in the greedy mode until a local minimum is encountered. The algorithm then switches to recovery mode, which involves traversing the faces of a planar subgraph of the original communication graph. Specifically, the algorithm traverses the faces that intersect the line segment that connects the source and the target.

Algorithms such as GFG/GPSR have a problem of face traversal direction choice. A face may potentially be traversed in two directions: clockwise and counter-clockwise. Face traversal may be inefficient if its traversal direction is selected inappropriately: the traversal distance may be long in one direction and short in the other. GOAFR+ \cite{GOAFR+} finds the shorter traversal direction by reversing it once the packet reaches a pre-determined ellipse containing source and target nodes.

A concurrent geometric routing algorithm CFR~\cite{CFR} sends multiple packets to traverse the faces intersecting the source-target line in both directions in parallel. This naturally selects the shortest face traversal direction. 

\subsection{Byzantine fault tolerance}
A Byzantine node~\cite{lamport1982byzantine,pease1980reaching} may behave arbitrarily. This is the strongest fault that can affect a node in a distributed system. A reputation-based approach~\cite{sanchez2016privhab+,pathak2008securing} is considered to deal with Byzantine faults. However, a faulty node may actively resist reputation compromise. Hence, in general, such approaches may only alleviate rather than eliminate the problem.
The power of the faults may also be mitigated with cryptography~\cite{lamport1982byzantine,dolev1983authenticated,katz2009expected} or randomization~\cite{ben1983another,feldman1997optimal}. 
Synchrony assumptions may help with fault handling as well~\cite{PBFT}: if packet transmission may be delayed only for a finite amount of time, then fault information may be obtained from lack of packet receipt. In a completely asynchronous system, such information is not available.

Cryptography may be too expensive for resource poor nodes, the source of true randomness may not be achievable and synchrony may be impossible to guarantee. If none of these primitives are available, the solution requires that the number of correct processes be large enough to overwhelm the faulty ones.

The complexity of Byzantine fault handling increases if the network is not completely connected. In this case, nodes may not communicate directly; they have to rely on intermediate nodes to forward the packets. Faulty nodes may tamper with such forwarding. To counter such faulty behavior, packets are sent along alternative routes. To enable this, the network should be sufficiently connected. In general topology, message transmission is possible only if the network is \(2\ByzN+1\)-connected~\cite{lamport1982byzantine,dolev1982byzantine,fischer1986easy}, where \(\ByzN\) is the maximum number of Byzantine faults.

\ \\
In this paper, we study Byzantine-robust geometric routing in asynchronous networks. We do not use cryptography, randomization or reputation. Instead, we use distributed geometric routing to bypass the faults.

\subsection{Related work}

Sanchez et al~\cite{sanchez2016privhab+} and Pathak et al~\cite{pathak2008securing} use both cryptographic and reputation-based approaches to secure geometric routing. Adnan et al~\cite{adnan2017secure} propose to secure geometric routing through pairwise key distribution. 
Boulaiche and Bouallouche~\cite{boulaiche2017hsecgr} use message authentication codes to prevent message tampering. Maurer and Tixeuil~\cite{maurer2014containing} consider containing faulty Byzantine nodes in control zones such that messages will be sent there only with the authentication form border nodes. Zahariadis et al~\cite{zahariadis2013novel} propose a reputation-based approach to geometric routing security.
Several papers discuss counteracting spurious locations reported by faulty nodes to secure geometric routing~\cite{garcia2010secure,leinmuller2006improved,vora2006secure}.
Recently, the problem of consensus has been explored in the geometric setting~\cite{oglio2021byzantine}. 
Zaz and Nesterenko~\cite{BeRGeRJMM} consider Byzantine-robust geometric routing but offer no solution to the problem. 
To the best of our knowledge, no Byzantine-robust asynchronous geometric routing algorithm without cryptography, randomization or reputation has been proposed.

Several related problems have been addressed with concurrent geometric routing. We consider unicasting: sending a message from a single source to a single target.
Alternatively, in multicasting, the same message is delivered to a set of nodes in the network~\cite{SRS06}. Sequential geometric multicasting algorithms~\cite{PBM,GMP,LGS} optimize message transmission routes by forwarding the same packet to multiple targets for a part of the route.
MCFR~\cite{adamek2018concurrent} concurrently sends packets along all the appropriate faces achieving faster delivery at the expense of a greater number of transmitted packets.
Another related problem is geocasting; in this problem, the source needs to deliver messages to every node in a particular target area. There are several sequential geocasting algorithms~\cite{GFG,SRS06,LNLC06}. Adamek et al~\cite{adamek2017stateless} present a concurrent geocasting solution. None of these algorithms consider Byzantine tolerance.

\subsection{Our contribution}
We present BeRGeR, an asynchronous unicast concurrent geometric routing algorithm that handles a single Byzantine fault. We assume 
the source and target nodes are connected by three internally node-disjoint paths that do not intersect the source-target line, formally prove BeRGeR correct under this assumption, and analyze its message complexity. 

\section{Preliminaries}

\subsection{Communication model}
A finite connected graph \(G\) is embedded in a geometric plane. Two nodes may not share the same coordinates. We, therefore, use node coordinates for both navigation and node identification. Two nodes adjacent to the same edge are \emph{neighbors}
Each node knows its own coordinates and the
coordinates of its neighbors.
Neighbors communicate by passing packets. The communication is bi-directional, so the graph, \(G\), is undirected. The packet transmission is FIFO, reliable and asynchronous.

Nodes are either correct or faulty. A \emph{correct} node operates according to the specified algorithm. A \emph{faulty} node is Byzantine: it may behave arbitrarily including sending packets, dropping received packets or not communicating at all. We use the standard assumption that the Byzantine node may not forge its own identifier, i.e. coordinates.
%
That is, the packet recipient always correctly identifies the sending neighbor. 

If a node sends a packet without receiving it first, the node \emph{originates} the packet.  A node \emph{forwards} a packet if it receives it and then sends it to another node. If a node forwards a packet to more than one neighbor, the node \emph{splits} the packet.
If a node receives but does not forward the packet, it \emph{drops} the packet. We assume that a faulty node drops all packets that it receives and originates all packets that it sends. The case of a faulty node forwarding a packet correctly is equivalent to dropping the packet and originating an identical packet. 

A \emph{message} is the gainful content to be transmitted by a sequence of forwarded packets. An arbitrary \emph{source} node \(s \in G\) is to transmit a message \(m_s\) to another arbitrary \emph{target} node \(t \in G\). To simplify the presentation, we assume that $s$ and $t$ are not neighbors. 
For our purposes, the message content is immaterial provided that two messages can be compared for equality. 
The target may receive the message from multiple neighbors, the target \emph{delivers} the message once it passes correctness checks. 
Forwarding a packet with the same message creates a message path. Consider two message paths from node \(u\) to node \(v\). These paths are \emph{internally node-disjoint} if the only nodes the paths share are \(u\) and \(v\).

\subsection{Planarity, faces, traversal}
An embedding of a graph is \emph{planar} if its edges intersect only at nodes. For short, we call such
a planar embedding a \emph{planar graph}. In a planar graph \(G\), a \emph{face} is a region on the plane such that any pair of its points can be connected by a continuous curve inside the face. If the graph is finite, then the area of all but one face is finite. The finite area faces are \emph{internal} faces and the infinite area face is the \emph{external} face.

To traverse a face of a planar graph, the packets are routed using right- or left-hand-rule. Consider a node \(v \in G\) that receives a packet from its neighbor \(u\).
In the \emph{right-hand-rule}, \(v\) forwards the packet to the next clockwise neighbor after \(u\).
In the \emph{left-hand-rule}, \(v\) forwards the packet to the next counter-clockwise neighbor after \(u\).
We call the obtained traversal paths respectively \emph{right} and \emph{left} and denote them \R\ and \L.
The right path traverses an internal face counter-clockwise and the external face clockwise. The left path traverses the internal and external faces in the opposite direction to the right path. 

Let $G - \overline{st}$ be the graph, $G$, without the edges intersecting the source-target line segment $\overline{st}$. 
The \emph{green face} is the union of faces that intersect \(\overline{st}\). In other words, the green face is the unique face that contains segment \(\overline{st}\) in \( G - \overline{st} \). This green face in \( G - \overline{st} \) may be either internal or external. A \emph{green node} is a node adjacent to the green face. The source and target are thus green nodes. Given a green node \(k\), the \(k\)-\emph{blue face} is a union of non-green faces adjacent to \(k\). A $k$-\emph{blue} node is a non-green node adjacent to the $k$-blue face.

\begin{figure}[htb]
\centering
\includegraphics[angle=-90,width=.50\linewidth]{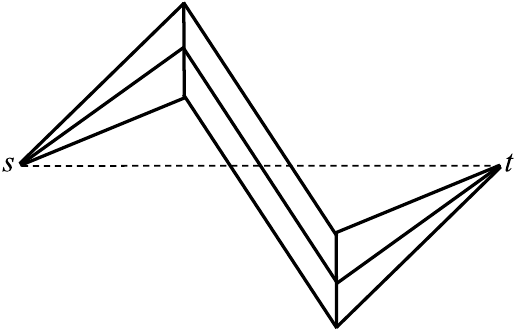}
\caption{A graph, \(G\), violating \nameref{G-st_3connected}. Even though the graph itself is three-connected, \(G - \overline{st}\) is disconnected.}\label{figWorm}
\end{figure}

\subsection{The Byzantine-Robust Geometric Routing Problem}
 An algorithm that solves the Byzantine-Robust Geometric Routing Problem delivers the message \(m_s\) sent by the source to the target subject to the following three properties:
\begin{description}
    \item[Validity: \label{validityDescription}] if the target delivers
    message \(m_t\), then \(m_t = m_s\);
    
    \item[Liveness: \label{livenessDescription}] the target eventually delivers \(m_t\);
    \item[Termination: \label{terminationDescription}] every packet is forwarded a finite number of times.
\end{description}

\subsection{Fault and connectivity assumptions}
We consider a solution to the problem with at most one faulty node \(\Byz \in G\). We assume that $s$ and $t$ are correct. 

The Byzantine node may originate a spurious message or stop the correct message from propagating. Thus, there need to be at least 3 node-disjoint paths to bypass the faulty node~\cite{dolev1982byzantine}. Hence, we consider 3-connected graphs. 

Finding internally node-disjoint paths is difficult even if the graph is 3-connected. Our idea is to spatially separate them: one path uses the left-hand-rule traversal of the green face \(G - \overline{st}\), the other --- right-hand-rule. To achieve this, when forwarding the message, each green node ignores the edges that intersect \(\overline{st}\). However, in general, this may eliminate potential paths to the target or leave the graph entirely disconnected. See \autoref{figWorm} for an example. To prevent such disconnect, we posit the following graph connectivity assumption:

\begin{assumption}[the Triconnectivity Assumption]\label{G-st_3connected}
  \( G - \overline{st} \) is 3-connected.
\end{assumption}

\section{BeRGeR Description}
In this section we present BeRGeR:   
a Byzantine-Robust Geometric Routing algorithm that solves the Reliable Message Delivery Problem with a single faulty node in a connected planar graph subject to \nameref{G-st_3connected}.

\subsection{Algorithm outline}
The algorithm operates as follows. The source concurrently sends two packets to traverse the green face in the opposite traversal directions. These packets are \emph{cores}. That is, the source sends a left core and a right core.

A green node, i.e. a node adjacent to the green face, may be faulty. The faulty node may send a core with a spurious message. To prevent this, the target waits to receive both cores with matching messages before delivering their message. In this case, however, a faulty green node may drop the packet altogether and prevent message delivery at the target. To counteract this, BeRGeR has a mechanism of bypassing, or skipping, every green node. As the green nodes forward the core along the border of the green face, they add the nodes that the core visited to the packet. In addition, each green node splits the core by sending a \emph{thread} packet that skips the next green node $k$. This thread packet traverses the union of \(k\)-blue and green faces.

A \emph{braid} is a set of threads that carry matching messages 
in the same direction (\L\ or \R). 
A braid \emph{matches} a core if it contains threads that skip each node that the core has visited and every such thread carries the same message as the core.
This way, if there is a faulty green node, at least one thread skips it. Therefore, if a faulty node attempts to drop a packet or originate a forged packet, the target does not collect a matching braid and core. Since the faulty node may be only on the left or right side of the green face, a matching braid and core arrive on the opposite side regardless of the faulty node's actions.

\begin{figure}[ht]
\centering
\includegraphics[angle=-90,width=.77\linewidth]{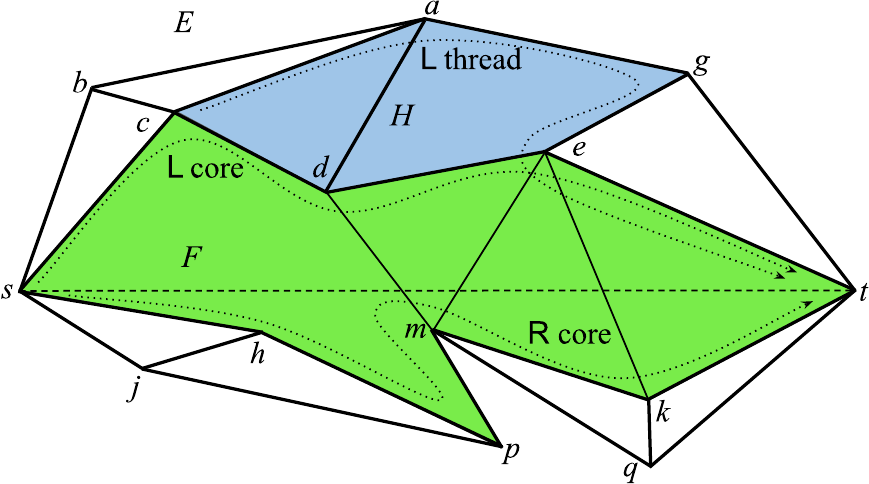}
\caption{BeRGeR example operation. \(\L\) core and \(\R\) core traverse the green face, \(F\). The second \(\mathsf{L}\) thread, the \(d\)-thread, skips node \(d\) and traverses the union of the \(d\)-blue face, \(H\), and the green face \(F\).
}
\label{figBerger}
\end{figure}

Refer to \autoref{figBerger} for an illustration of the algorithm operation. The planar graph shown in the picture complies with \nameref{G-st_3connected}. Note that greedy routing fails on this graph if the packet reaches node \(p\) since \(p\) is a local minimum. Face $F$ is green. Nodes \(s,c,d,e,t,k,m,p\) and \(h\) are adjacent to $F$. These nodes are, therefore, green.
Face \(H\) is the \(d\)-blue face adjacent to \(F\). Nodes \(c, a, g, e\)  are $d$-blue since they are not green but adjacent to the $d$-blue face, $H$.  Face \(E\) is the external face of the graph. The source sends two core packets that traverse \(F\): left core \(\L\) and right core \(\R\). The source also sends a left and a right thread that are not shown.
The figure illustrates a single $\L$ thread that is generated by $c$ when $c$ receives a core. That is, $c$ splits this core and sends thread that skips the next green node $d$. This thread traverses part of \(H\) and then \(F\) to reach $t$.

\SetKw{and}{\ and\ }
\SetKw{or}{\ or\ }

\SetKwFunction{onThisSide}{onThisSide}

\SetKwData{R}{R}
\SetKwData{L}{L}
\SetKwData{true}{true}
\SetKwData{false}{false}

\begin{algorithm}[htb]

\footnotesize
\caption{BeRGeR variables and  functions}\label{algBERGERfunctions}

\SetKwProg{nextNode}{\(\texttt{nextNode}(p, s, t, c, k)\)}{}{}
\SetKwProg{onThisSideOf}{\(\texttt{onThisSide}(s, t, n, i)\)}{}{}
\SetKwProg{invalid}{\(\texttt{invalid}(p, m, s, t, k, \ell)\)}{}{}

\textbf{constants} \\
\ \ \ \ \(n\) \ \ \tcp*[f]{node coordinates} \\
\ \ \ \ \(N\) \ \ \tcp*[f]{set of neighbor coordinates} \\
\ \ \ \ \(\{ \L,\R \}\) \ \ \tcp*[f]{left and right packet direction} \\
\vspace{2mm}

\textbf{variables} \\
\ \ \ \ \(T\) \tcp*[f]{set of packets received by target} \\
\vspace{2mm}

\textbf{functions}\\
\vspace{1mm}
\nextNode(){}{
     \(M := \{i \in N : i\neq k \and \onThisSide(s,t,n,i)\}\) \\
     \Return \(i \in M : i\) is nearest to \(p\) in direction  \(c \in \{\L, \R\} \) \\
      \tcp{node selection direction \L: counter-clockwise, \R: clockwise}
}

\vspace{2mm}

\onThisSideOf(){}{
    \tcp{return \true if the \(n\)-\(i\) and \(s\)-\(t\) line segments do not intersect}
    \Return \( \overline{ni} \cap \overline{st} = \emptyset \)
}

\vspace{2mm}

\invalid(){}{ \label{lineInvalid}
  \If(\tcp*[f]{traveled in a cycle}){\( k = p \)\tcp*[f]{sender is skipping itself}\label{lineSkipsItself} \\
  \or \( \left( n = s \and k = \bot \right) \) \tcp*[f]{a core arrives at \(s\)}\label{lineCoreDroppedByS} \\
  \or \( n \in \ell \) }
  { \Return \true }
  \Else{\Return \false}
}\label{endOfAlgOne}
\end{algorithm}

\begin{algorithm}[htbp]
\footnotesize

\SetKw{from}{from}
\SetKw{deliver}{deliver }
\SetKw{receive}{receive}

\SetKwFor{Receive}{\receive}{\(\longrightarrow\)}{}

\SetKwFunction{invalid}{invalid}
\SetKwFunction{nextNode}{nextNode}
\SetKwFunction{send}{send}

\SetKwData{coreR}{coreR}
\SetKwData{coreL}{\makebox[0pt][l]{coreL}\phantom{coreR}}
\SetKwData{threadR}{threadR}
\SetKwData{threadL}{\makebox[0pt][l]{threadL}\phantom{threadR}}

\SetKwProg{initial}{source node, initial action}{:}{}

\caption{BeRGeR Actions}\label{algBERGER}

\textbf{source node, input:} \\
\ \ \ \ \(t\) \tcp*[f]{target} \\
\ \ \ \ \(m\) \tcp*[f]{message}%
\vspace{1mm}

\initial(){}{ \label{lineInitAction}
    \makebox[0pt][l]{\( \send(m, s, t, \R, \bot, \langle \rangle ) \to \)}
    \( \phantom{\send(m, s, t, \R, k, \langle s \rangle) \to}
    \nextNode(t, s, t, \R, \bot) \)
    \tcp*[f]{new \R core}
        
    \makebox[0pt][l]{\( \send(m, s, t, \L, \bot, \langle \rangle) \to \)}
    \( \phantom{\send(m, s, t, \R, k, \langle s \rangle) \to}
    \nextNode(t, s, t, \L, \bot) \)
    \tcp*[f]{new \L core}
        
    \phantom{\(\send(m, s, t, \R, k, \langle s \rangle) \to\)}
    \(\makebox[0pt][r]{k := }
    \nextNode(t, s, t, \R, \bot)\)

    \(\send(m, s, t, \R, k, \langle s \rangle) \to\) \(\nextNode(t, s, t, \R, k)\)
    \tcp*[f]{new \R thread}

    \phantom{\(\send(m, s, t, \R, k, \langle s \rangle) \to\)}
    \(\makebox[0pt][r]{k := }
    \nextNode(t, s, t, \L, \bot)\)

    \makebox[0pt][l]{\( \send(m, s, t, \L, k, \langle s \rangle) \to \)}
        \( \phantom{\send(m, s, t, \R, k, \langle s \rangle) \to}
        \nextNode(t, s, t, \L, k) \)
        \tcp*[f]{new \L thread}
}
\vspace{1mm}

\textbf{every node:} \\
\Receive{\((m, s, t, c, k, \ell)\) \from \(p\)}{ \label{lineReceiveAction}
    \If{\(\invalid(p, m, s, t, k, \ell)\)}{
        \Return \tcp*[f]{drop invalid packets}
    }
    \If(\tcp*[f]{if this is a core}){\(k = \bot\)}{
        \(\ell\).append(\(p\))  \tcp*[f]{append sender to list of visited nodes}
    }

    \If(\tcp*[f]{packet is not at target}){\( n \neq t \)}{ \label{linePacketNotTarget}
                \(\send(m, s, t, c, k, \ell) \to \nextNode(p, s, t, c, k)\) \tcp*[f]{forward packet} \\
                \If(\tcp*[f]{if received packet is core}){\(k = \bot\)}{
                    \(k := \nextNode(p, s, t, c, \bot)\) \tcp*[f]{find next green node} \\
                    \If(\tcp*[f]{if \(k\) is unvisited}){\(k \not\in \ell\)}{ \label{lineCheckKUnvisited}
                        \(\send(m, s, t, c, k, \langle n \rangle) \to \nextNode(p, s, t, c, k)\) \tcp*[f]{new thread} \\
                    }
                }
    }

    \Else(\tcp*[f]{packet is at target}){\label{linePacketTarget}
        \tcp{green nodes neighboring \(t\)}
        \( \coreR := \nextNode(s, s, t, \L, \bot) \) \\
        \( \coreL := \nextNode(s, s, t, \R, \bot) \) \\
        \vspace{1mm}

        \tcp{node(s) neighboring \(t\) next to green nodes; may be identical}
        \( \threadR := \nextNode(s, s, t, \L, \coreR) \) \\
        \( \threadL := \nextNode(s, s, t, \R, \coreL) \) \\
        \vspace{1mm}

        \If{\( \ \ \ \ \ \ k = \bot \and c = \R \and p = \coreR \)}{ \label{lineIfRCore}
            $T$.add\((m, s, \R, \bot, \ell)\)  \tcp*[f]{record \R core}
        }
        \ElseIf{\( k \neq \bot \and c = \R \and  p \in \{\coreR, \threadR\} \)}{
            $T$.add\((m, s, \R, k, \langle \rangle)\)  \tcp*[f]{record \R thread}
        }
        \ElseIf{\( k = \bot \and c = \L \and p = \coreL \)}{
            $T$.add\((m, s, \L, \bot, \ell)\)  \tcp*[f]{record \L core}
        }
        \ElseIf{\( k \neq \bot \and c = \L \and p \in \{\coreL, \threadL\} \)}{
            $T$.add\((m, s, \L, k, \langle \rangle)\)  \tcp*[f]{record \L thread}
        }
        \vspace{1mm}

        \If{\( \exists \ell_1, \ell_2 : \{(m, s, \L, \bot, \ell_1), (m, s, \R, \bot, \ell_2) \} \subset T \and \ell_1 \cap \ell_2 = \{s\} \)}{ \label{lineMatchingCores}
            \(\deliver m\)  \tcp*[f]{matching cores}
        }
        \ElseIf{   \( \exists (m, s, c, \bot, \ell) \in T : \forall i \in \ell, (m, s, c, i, \langle \rangle) \in T \)}{ \label{lineBraidMatchesCore}
            \( \deliver m\) \tcp*[f]{matching core and braid of threads}
        }
    }
}
\end{algorithm}


\subsection{Algorithm details}
BeRGeR pseudocode is shown in Algorithms~\ref{algBERGERfunctions} and~\ref{algBERGER}.
Algorithm~\ref{algBERGERfunctions} shows constants, variables and functions used in BeRGeR. Specifically, each node maintains its own geometric coordinates \(n\) and the set of coordinates of its neighbors \(N\). Constants \L and \R denote packet traversal direction.
The target node \(t\) maintains a set of packets \(T\) that it received. The target checks \(T\) to see if the received packets satisfy message delivery conditions. Function \(\nextNode(p,s,t,c,k)\)
uses the neighbor \(p\) of \(n\) to select the node according to traversal direction \(c\), either \L or \R. This selection excludes node \(k\) if it is skipped by a thread, and it considers only the neighbors on the same side of \(\overline{st}\). This includes \(s\) and \(t\) themselves. This check is done in function \(\onThisSide(s,t,n,i)\). This function ensures that algorithm packets do not use the edges that cross the green face.

\ \\
Algorithm~\ref{algBERGER} shows BeRGeR actions. There are two: the source node initial action that originates the packets (see~\autoref{lineInitAction}), and the receipt action taken by every node when it receives a packet (see~\autoref{lineReceiveAction}). As input, the source node \(s\) takes target coordinates \(t\)and the message \(m\) to be delivered to it.
In the initial action, the source sends four packets: two cores and two threads. The cores go in left and right directions along the green face. The threads skip the first green nodes in the two directions.

The packet format is as follows: the message \(m\); the source, $s$, and target, $t$; traversal direction \L or \R; the node \(k\) to be skipped or \(\bot\) if it is a core packet; and the list of visited nodes. A core packet starts with an empty list. This list is only updated for cores.
A thread carries its originator in the visited list. 

Let us describe the packet receive action. First, the packet is checked for validity (see~\autoref{lineInvalid}). The packet is dropped if it has traveled in a cycle, a core is passing through \(s\), or the sender, \(p\), is sending a thread that skips \(p\). Otherwise, if the received packet is a core, \(p\) is appended to the visited node list \(\ell\) and further processing depends on whether the packet has arrived at the target.

If the packet recipient is not the target (see~\autoref{linePacketNotTarget}), then the node forwards the packet and, if it is a core packet, then the node also sends a thread skipping the next green node, provided that green node is not in the visited list \(\ell\).

If the packet recipient node is the target (see~\autoref{linePacketTarget}), the recipient checks that the packet comes from an expected node and then records the receipt of the core or thread packet in \(T\). Then, the target determines if message delivery conditions are met. Specifically, if $T$ has a record of two matching cores or a matching core and a braid. \emph{Matching cores} carry the same message in the opposite traversal directions. \emph{Matching core and braid} are a core and a set of threads such that they are in the same traversal direction, carry the same message and, for every node that the core visited, there is a thread that skips it. 
If the target receives matching cores or a matching core and a braid, the target delivers the message that they carry.

\section{BeRGeR Correctness Proof}

\begin{lemma}\label{lemSingleFace}
    A core packet traverses a single face of $G - \overline{st}$ and a thread skipping node $k$ traverses a single face of $G - \overline{st} -\{k\}$.
\end{lemma}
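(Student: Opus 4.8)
The statement has two halves; I will prove them by essentially the same argument applied to two different graphs. For the core half, the key observation is that the function $\texttt{nextNode}$, when invoked with $k = \bot$, implements exactly the right- or left-hand rule restricted to the edge set of $G - \overline{st}$: the predicate $\texttt{onThisSide}(s,t,n,i)$ filters out precisely those incident edges $\overline{ni}$ that cross $\overline{st}$, and then the packet is handed to the nearest neighbor in clockwise (for $\R$) or counter-clockwise (for $\L$) order. Since $G$ is a planar embedding and deleting edges preserves planarity, $G - \overline{st}$ is a planar graph, and the standard face-traversal fact quoted in the Preliminaries applies: iterating the right-hand rule (resp. left-hand rule) from a directed edge walks around a single face of $G - \overline{st}$. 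So I would first state this correspondence as the crux, then invoke the classical face-traversal invariant to conclude the core walks one face.

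First I would set up notation: fix a core packet, say the $\R$ core, originated by $s$ via \autoref{lineInitAction}; it is handed to $\texttt{nextNode}(t,s,t,\R,\bot)$, and thereafter every forwarding step on \autoref{linePacketNotTarget} calls $\texttt{nextNode}(p,s,t,\R,\bot)$ with $p$ the previous holder. I would check that along this walk $k$ stays $\bot$ (cores are never converted to threads — the thread spawned on \autoref{lineCheckKUnvisited} is a new packet with $k \ne \bot$), so every hop uses the $k=\bot$ branch of $\texttt{nextNode}$, i.e. pure right-hand rule in $G - \overline{st}$. Then the sequence of directed edges traversed is exactly the right-hand face-traversal sequence in $G - \overline{st}$ starting from the edge $(s, \texttt{nextNode}(t,s,t,\R,\bot))$, which by the cited property (the right path traverses one internal or the external face) lies on a single face. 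The walk ends when the packet reaches $t$, which is on that face since $s$ and $t$ are both green, i.e. both adjacent to the green face; so the claim for cores follows.

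For the thread half, I would repeat the argument in the graph $G - \overline{st} - \{k\}$. A thread skipping $k$ is originated (on \autoref{lineCheckKUnvisited}, or by the source) and then forwarded on \autoref{linePacketNotTarget}; every such forwarding step calls $\texttt{nextNode}(p,s,t,c,k)$ with this fixed $k$, and the only change from the core case is the extra clause $i \ne k$ in the definition of $M$, which is exactly the effect of deleting node $k$ (and all its incident edges) from $G - \overline{st}$. Deleting a vertex from a planar graph again yields a planar graph, so the face-traversal property applies verbatim there, and the thread walks one face of $G - \overline{st} - \{k\}$. The $\texttt{invalid}$ check ($k = p$, i.e. a node being asked to skip itself, and the cycle check $n \in \ell$) only ever terminates the walk; it never diverts it, so it does not affect which face is traversed.

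The main obstacle is making the "$\texttt{nextNode}$ $=$ hand rule in the reduced graph" correspondence airtight, in particular the boundary behavior: I must argue that the edges removed by $\texttt{onThisSide}$ are consistently removed at both endpoints (so that $G - \overline{st}$ is well-defined as a subgraph and the traversal never tries to come back along a deleted edge), and that the nearest-in-angular-order selection in $\texttt{nextNode}$ agrees with the cyclic-edge-order notion of "next edge" used in the textbook face-traversal lemma even after edges/vertices are deleted — deletion can merge faces, but the local rule at each surviving vertex is unchanged, so the merged face is still traversed as one face. I would also need the minor point that, because $s$ and $t$ are not neighbors and lie on the green face, a core or thread cannot be stopped prematurely by the $n = s$ validity clause before it has had a chance to traverse its face; but this is a termination concern rather than a single-face concern, so I would only remark on it in passing.
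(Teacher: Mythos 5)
Your proposal is correct and takes essentially the same approach as the paper: the paper's (much terser) proof likewise rests on the observation that, by design, \(\texttt{nextNode}\) applies the right- or left-hand rule in \(G - \overline{st}\) (or in \(G - \overline{st} - \{k\}\) when a node is skipped), so the standard face-traversal property yields a single-face walk. Your added care about the \(\texttt{invalid}\) checks only terminating (never diverting) the walk, and about packets possibly originated by the faulty node on some other face, matches the paper's closing remark that such a packet simply traverses whatever face it starts on.
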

\begin{proof}
    By the design of the algorithm, a packet traverses a single face. In forwarding a core packet, each node ignores the edges that intersect $\overline{st}$. If the source originates a core packet, this packet traverses the green face of $G - \overline{st}$. Similarly, thread skipping $k$ originated by the source traverses the union of the green and the $k$-blue face of $G - \overline{st} -\{k\}$. A faulty node may send a packet in some other face, in which case it traverses that face. 
    \qed
\end{proof}

\begin{figure}
\centering
\includegraphics[angle=-90,width=.30\linewidth]{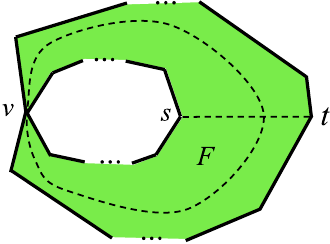}
\caption{Illustration for the proof of \autoref{coreDisjoint}. If left and right core paths share node $v$, then $v$ can be connected by a continuous curve that separates $s$ from $t$. Hence, every path from $s$ to $t$ contains $v$.}\label{figDisjoint}
\end{figure}

Let \emph{left core path} be the left-hand-rule traversal path on the green face from the source to the target. Similarly, \emph{right core path} be the right-hand-rule traversal path on the green face from $s$ to $t$.

\begin{lemma}\label{coreDisjoint}
Left and right core paths are internally node-disjoint.
\end{lemma}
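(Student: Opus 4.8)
The plan is to use the topological/Jordan-curve structure of the green face in $G-\overline{st}$. Both the left core path $P_{\L}$ and the right core path $P_{\R}$ are walks along the boundary of the same face $F$ (the green face), one traversing it clockwise and the other counter-clockwise, and both run from $s$ to $t$. Suppose, for contradiction, that they share an internal node $v \neq s,t$. I would then argue that the portion of the two boundary walks together with the node $v$ lets us build a continuous curve lying entirely inside $\overline{F}$ (the closure of the green face) that connects $v$ to a point on the segment $\overline{st}$ on "both sides", thereby separating $s$ from $t$ within $F$ — see \autoref{figDisjoint}. Concretely, since $\overline{st}$ lies inside $F$ and $v$ is on the boundary of $F$, and since the left walk reaches $v$ going counter-clockwise while the right walk reaches $v$ going clockwise, the boundary of $F$ is split by $s$, $t$, and $v$ into arcs, and $v$ ends up on the boundary arc strictly between $s$ and $t$ on one side — but it is claimed to be on \emph{both} core paths, i.e. on both boundary arcs between $s$ and $t$. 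That forces $v$ to be a cut vertex separating $s$ from $t$ in $G - \overline{st}$.

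The key steps, in order: (1) Recall from \autoref{lemSingleFace} that, absent Byzantine interference on these particular packets, each core stays on the boundary of the green face $F$; the source originates them correctly, so the left and right core paths are genuinely the two boundary traversals of $F$ from $s$ to $t$. (2) Set up the Jordan-curve argument: the boundary $\partial F$ is a closed curve (for a $3$-connected planar graph every face boundary is a simple cycle); $s$ and $t$ both lie on it and divide it into two arcs $A_{\L}$ (traced by the left core) and $A_{\R}$ (traced by the right core). (3) Observe that $\overline{st}$ is a chord lying inside $F$, so together with either arc it bounds a sub-region of $F$. (4) Assume $v$ is an internal node common to $P_{\L}$ and $P_{\R}$; then $v \in A_{\L} \cap A_{\R}$, but two distinct arcs of a simple closed curve meeting only at their shared endpoints $s,t$ can intersect elsewhere only if $v \in \{s,t\}$ — contradiction — \emph{unless} the "face boundary is a simple cycle" premise fails, which is exactly where the $3$-connectivity of $G-\overline{st}$ (\nameref{G-st_3connected}) is used. (5) Conclude the paths are internally node-disjoint.

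I expect the main obstacle to be making the topological argument rigorous at the level of the graph rather than hand-waving with pictures: specifically, justifying that the green face boundary is traversed as a simple cycle so that the two core paths really are the two arcs between $s$ and $t$, and handling the degenerate possibilities (a node appearing twice on the boundary walk, the green face being the external face, $s$ or $t$ having the segment $\overline{st}$ emanating into a non-green incident face). The cleanest route is to invoke \nameref{G-st_3connected}: in a $3$-connected planar graph all face boundaries are simple cycles, so each of $P_{\L}$, $P_{\R}$ visits each node at most once, and they partition the cycle $\partial F$ into the two $s$–$t$ arcs; then the only shared nodes are $s$ and $t$. An alternative, if one wants to avoid quoting the simple-cycle fact, is the direct separation argument sketched above: a shared internal $v$ yields a curve inside $\overline F$ from $v$ to $\overline{st}$ witnessing that $v$ separates $s$ from $t$ in $G - \overline{st}$, contradicting $3$-connectivity. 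Either way, the $3$-connectivity assumption is the crux, and I would lead with it.
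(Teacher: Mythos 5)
Your proposal is correct and, in its ``alternative'' form, is essentially the paper's own proof: suppose the two core paths share an internal node $v$; then a closed curve through the interior of the green face, starting and ending at $v$, separates $s$ from $t$, so every $s$--$t$ path in $G-\overline{st}$ passes through $v$, contradicting \nameref{G-st_3connected} (which, via Menger, guarantees three internally node-disjoint $s$--$t$ paths). The route you lead with is different in packaging: rather than drawing a separating curve, you invoke the standard fact (Whitney) that in a $3$-connected planar graph every face is bounded by a simple cycle, so the left and right core paths are exactly the two $s$--$t$ arcs of the cycle bounding the green face of $G-\overline{st}$ and therefore meet only at $s$ and $t$. That version buys rigor cheaply --- it disposes at once of the degenerate cases you rightly worry about (a node occurring twice on the boundary walk, the green face being external) by reducing them to a citable theorem --- whereas the paper's separation argument is self-contained and makes the geometric intuition of \autoref{figDisjoint} explicit, at the cost of leaving the construction of the separating curve informal. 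One small remark: the lemma concerns the core \emph{paths} as geometric objects rather than the packets themselves, so your caveat about Byzantine interference in step (1) is not needed here; it only becomes relevant in \autoref{coreValidity} and later.
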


\begin{proof}
We prove the lemma by contradiction. Suppose the opposite: the left and right core paths share an internal node $v$. See \autoref{figDisjoint} for illustration.
In this case, we can draw a closed curve that starts and ends in $v$ and whose interior is inside the green face. This curve separates the plane into two areas: one of them contains $s$ and the other $t$. This means that every path from $s$ to $t$ contains $v$.
However, \nameref{G-st_3connected} states
that $G - \overline{st}$ is three-connected. This means that there must be at least three 
internally node-disjoint paths between $s$ and $t$. 
Hence, our initial supposition is not correct. Therefore, left and right core paths must be internally node-disjoint.
\qed
\end{proof}

Let \emph{left core} and \emph{right core} are packets following left and right core paths respectively.

\begin{lemma}[Core validity]\label{coreValidity}
    If the target receives a left core and a right core carrying messages $m_{lc}$ and $m_{rc}$ respectively and $m_{lc} = m_{rc}$, then $m_{lc} = m_s$.
\end{lemma}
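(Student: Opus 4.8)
The plan is to argue that with only a single Byzantine node $\Byz$, at least one of the two core paths is entirely composed of correct nodes, and a core that is forwarded only by correct nodes (and originated by the correct source $s$) must carry $m_s$ unchanged. By \autoref{coreDisjoint}, the left and right core paths are internally node-disjoint, so they share only $s$ and $t$, both of which are assumed correct. Since there is at most one faulty node, $\Byz$ can lie on the interior of at most one of the two paths. Hence one of the two core paths — say, without loss of generality, the left core path — consists solely of correct nodes.

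Next I would trace the left core along that path. The source originates it via the initial action with message field $m = m_s$, direction $\L$, skip field $\bot$, and empty visited list. Each subsequent node on the path is correct, so by the receive action it either drops the packet (if \texttt{invalid}) or forwards it verbatim in the message field: the only mutation a correct node performs on a core is \(\ell\).append\((p)\), which does not touch $m$. Thus whatever core packet reaches $t$ along this path still has $m = m_s$. One subtlety: I must confirm that this genuine left core actually does reach $t$ and is \emph{recorded} by $t$ as a left core — i.e. it is not discarded as invalid en route and it passes the origin check $p = \coreL$ at the target. Invalidity en route is ruled out because the left core path is a simple path (face traversal on $G - \overline{st}$ visits each node once, by \autoref{lemSingleFace} together with the standard face-traversal property), so the cycle and $n \in \ell$ conditions never fire, $k = \bot$ so the self-skip condition never fires, and a core reaching $s$ again cannot happen on a simple $s$-to-$t$ path. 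The origin check holds because the last hop into $t$ along the left-hand-rule traversal is exactly the node $t$ computes as $\coreL = \nextNode(s,s,t,\R,\bot)$.

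Now I combine this with the hypothesis. The target received a left core carrying $m_{lc}$ and a right core carrying $m_{rc}$ with $m_{lc} = m_{rc}$. By the argument above, the genuine all-correct core — say the left one — carries $m_s$, so $m_{lc} = m_s$, and therefore $m_{rc} = m_s$ as well. In the other case ($\Byz$ on the left core path, genuine right core), symmetrically $m_{rc} = m_s$ hence $m_{lc} = m_s$. Either way $m_{lc} = m_s$, which is the claim. A cleaner way to phrase the conclusion: at least one of $\{m_{lc}, m_{rc}\}$ equals $m_s$ because its carrier travelled only through correct nodes; since the two are assumed equal, both equal $m_s$.

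The main obstacle I anticipate is not the disjointness/single-fault counting, which is immediate, but the bookkeeping around \emph{which} left core the target records: a faulty node not on the left core path could still inject a spurious packet claiming direction $\L$, and I need the target's origin check ($p \in \{\coreL, \threadL\}$ for threads, $p = \coreL$ for cores) plus planarity to ensure the packet the target accepts as "the left core" is the one that actually traversed the green-face left path from $s$. Pinning down that a spuriously injected $\L$-core either fails the $p = \coreL$ check or, if it does arrive via $\coreL$, must itself have been forwarded by the (correct) node $\coreL$ and hence is the genuine core — this is the step that needs care, and it leans on $s,t$ being correct and on $G-\overline{st}$ being $3$-connected so the face geometry around $t$ is well-behaved.
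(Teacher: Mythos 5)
Your proof is correct and follows essentially the same route as the paper's: node-disjointness of the two core paths (Lemma~\ref{coreDisjoint}) plus the single-fault bound imply one core path is entirely correct, so the core arriving along it carries $m_s$ unchanged, and equality of the two received messages finishes the argument. The extra worry you raise about spuriously injected $\L$-cores is outside the scope of this lemma as stated, since the paper defines a ``left core'' as a packet that actually follows the left core path from $s$ to $t$; the paper's proof likewise dispatches that concern only briefly via Lemma~\ref{lemSingleFace} and the target's adjacency check.
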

\begin{proof}
The target receives a core packet from a node adjacent to the green face. According to~\autoref{lemSingleFace}, such packet traverses the green face only. According to~\autoref{coreDisjoint}, left and right core paths are disjoint.  Since there is at most one fault in the network, at least one of these paths is fault-free. Therefore, either left or right core packets are forwarded by correct nodes only. In this case, it carries the message sent by the source.  Hence, if the target receives two identical messages from both left and right core paths, this message is sent by the source.\qed
\end{proof}

\begin{lemma}[Thread validity]\label{lemThread}
If a thread skips a green node $k$ and reaches a correct node, it is not originated by $k$.
\end{lemma}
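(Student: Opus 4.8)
## Proof Proposal for Lemma (Thread Validity)

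The plan is to argue by contradiction, exploiting the structure of how threads are created and the validity check performed by each node on receipt. Suppose a thread skipping green node $k$ is originated by $k$ itself and nonetheless reaches a correct node. The key observation is that a thread skipping $k$ carries its originator in the visited list $\ell$; by the algorithm, when a correct node $n$ creates a thread skipping the \emph{next} green node $k$, it sets $\ell = \langle n \rangle$, so a legitimately-created thread skipping $k$ has a visited list whose sole entry is \emph{not} $k$ (it is the predecessor green node). So if $k$ originates a thread skipping itself, to be consistent with the format it would have to put some node $\neq k$ in $\ell$ — but then I must trace where that thread can travel and show it cannot reach a correct node without being dropped.

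First I would recall the \texttt{invalid} check on Algorithm~\ref{algBERGERfunctions}, \autoref{lineInvalid}: a packet is dropped when $k = p$, i.e. when the node receiving the packet sees that the sender $p$ equals the skipped node $k$. Since $k$ originates the thread, $k$ is the sender $p$ of the first hop. Thus the very first correct node to receive this thread directly from $k$ sees $p = k$ and $k$ (the skipped field) equal, so it invokes the \texttt{invalid} clause on \autoref{lineSkipsItself} and drops the packet. This handles the case where a correct node receives the spurious thread \emph{directly} from $k$.

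The remaining case — and the main obstacle — is that $k$ might route its spurious thread through a detour of its own so that a correct node receives it from some intermediate forwarder $p' \neq k$. But $k$ is the only faulty node, so any such intermediate forwarder is correct, and a correct node forwards a thread only after it passes the \texttt{invalid} check and only by calling \texttt{nextNode}; crucially, a correct node never \emph{creates} a new thread skipping $k$ unless it is itself the green node immediately preceding $k$ on a core path, and in that case the thread it creates is a legitimate one carrying the source's message, not a spurious $k$-originated one. So to reach a correct node, the spurious thread must be forwarded by a chain of correct nodes starting from the first correct recipient — but that first correct recipient received it directly from $k$ (there are no other faulty nodes to relay it), hence with $p = k$, hence dropped. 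Therefore no correct node ever forwards it, contradicting the assumption that it reaches a correct node.

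I would close by noting the one subtlety to nail down: that ``reaches a correct node'' means received and not immediately discarded as invalid — the lemma should be read as saying a thread that a correct node acts upon (forwards or records) cannot have been originated by $k$. With the FIFO/reliable channel assumption and the single-fault assumption, the dropped-on-first-correct-hop argument is airtight, and the lemma follows. $\qed$
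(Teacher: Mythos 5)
Your handling of the case where $k$ directly emits a thread packet skipping $k$ is correct and matches the paper: since $k$ is the only faulty node, the first recipient of that packet is correct, sees the sender $p$ equal to the skipped node $k$, and drops it by the check on \autoref{lineSkipsItself}; no relay chain can form because there is no second faulty node. The genuine gap is the other way $k$ can be the originator of a thread that skips $k$. In this paper's terminology, splitting a core into a thread is a form of forwarding, so the originator of a split-off thread is the originator of the core. If $k$ originates a spurious core carrying a forged message, a correct green node downstream could in principle split that core into a thread that skips $k$ — a thread that skips $k$, reaches correct nodes, and is (indirectly) originated by $k$. You dismiss this by asserting that any thread a correct node creates ``is a legitimate one carrying the source's message,'' but that is precisely the claim that needs proof, and your argument never supplies the mechanism that makes it true.

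The paper closes that case as follows: when the first correct node receives $k$'s spurious core, it appends the sender $k$ to the core's visited list $\ell$, and a correct node only creates a thread skipping the next green node $k$ after verifying $k \not\in \ell$ (\autoref{lineCheckKUnvisited}). Hence no correct node ever splits a $k$-originated core into a thread that skips $k$. Without this step your proof does not exclude a forged thread skipping $k$ arriving at the target, which is exactly the property that the braid-validity argument (\autoref{lemBraid}) draws from this lemma. A minor additional point: your opening observation that a forged thread would have to carry some node other than $k$ in $\ell$ is a red herring — the drop in the direct case is triggered by the sender-equals-skipped-node check, not by inspecting $\ell$.
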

\begin{proof}
Only the source and the faulty node may originate threads. We assume that the source is correct. Let us consider some {other} faulty node $k$. 
Note that $k$ may potentially originate a thread that skips $k$, or may originate a core that splits into a thread that skips $k$. If $k$ originates a thread that skips itself, then, by the design of the algorithm (see \autoref{lineSkipsItself}), the recipient does not forward it. 

Consider now the case where $k$ originates a core. In this case, this core contains $k$ in the list of visited nodes $\ell$. Therefore, no correct node that receives this core sends a thread that skips $k$ (see~\autoref{lineCheckKUnvisited}).  In either case, the thread that skips $k$ is not originated by $k$. 
\qed
\end{proof}

\begin{figure}
\centering

\includegraphics[angle=-90,width=.40\linewidth]{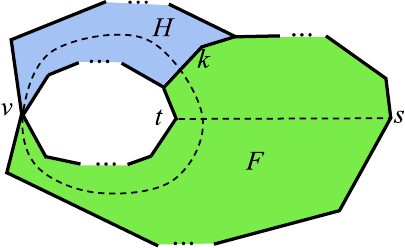}
\caption{Illustration for the proof of \autoref{threadDisjoint}. If a blue face, $H$, bypassing green node $k$ contains a green node, $v$, that lies on the right core path, then $v$ can be connected by a continuous curve inside $F\cup H$ that separates $s$ from $t$ and, hence, every path from $s$ to $t$ either contains $k$ or $v$.}\label{figKDisjoint}
\end{figure}

\begin{lemma}\label{threadDisjoint}
The path traversed by a left thread does not contain any node of the right core path. Similarly, the path traversed by the right thread does not contain nodes of the left core path.
\end{lemma}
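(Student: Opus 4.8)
The plan is to mirror the argument of \autoref{coreDisjoint}, but now using the key topological observation suggested by \autoref{figKDisjoint}: a left thread skipping $k$ travels inside the region $F \cup H$, where $F$ is the green face and $H$ is the $k$-blue face, while by \autoref{coreDisjoint}-style reasoning the right core path travels along the boundary of $F$ on the side of $\overline{st}$ opposite to the left-hand traversal. I would first fix notation: let $H$ be the $k$-blue face and recall from \autoref{lemSingleFace} that the left thread skipping $k$ traverses the single face of $G - \overline{st} - \{k\}$ containing $\overline{st}$, which is exactly $F \cup H$ (the green face of $G-\overline{st}$ merged with the non-green faces at $k$).

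Then I would argue by contradiction: suppose the left thread path contains a node $v$ that also lies on the right core path. Such a $v$ is a node on the boundary of $F \cup H$, hence (being on the right core path) it is a green node on the boundary of $F$. As in \autoref{figKDisjoint}, I can then draw a continuous curve lying entirely inside $F \cup H$ that connects $v$ to a point arbitrarily close to $k$ — or more carefully, I connect $v$ to $k$ through the interior of $H$, and note $k$ itself sits on the boundary of $F$. Splicing this with a short arc inside $F$, I obtain a closed curve through $\{v,k\}$ whose interior is contained in $F \cup H$ and which separates $s$ from $t$ in the plane. Consequently every $s$-to-$t$ path in $G$ must pass through $v$ or through $k$, i.e.\ $\{v,k\}$ is a 2-cut separating $s$ from $t$ in $G - \overline{st}$ (the curve stays off $\overline{st}$ because it stays inside $F\cup H$, which does not meet $\overline{st}$ except along $F$'s interior where no edges lie). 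This contradicts \nameref{G-st_3connected}, which guarantees three internally node-disjoint $s$--$t$ paths in $G - \overline{st}$ and hence no 2-cut. The symmetric statement for the right thread and the left core path follows by swapping \L\ and \R.

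The main obstacle I anticipate is making the separation argument fully rigorous, specifically: (i) verifying that $v$, lying on both the left thread path and the right core path, is necessarily a \emph{green} node on $\partial F$ — a thread path node could a priori be $k$-blue rather than green, so I need that a node shared with the right \emph{core} path is green by definition of the core path; and (ii) handling the degenerate possibility $v = k$, which cannot happen since the left thread skips $k$ and therefore never visits $k$ (its path lies in $G - \overline{st} - \{k\}$), so $v \neq k$ and the 2-cut is genuinely of size two. A secondary subtlety is that a faulty node might send the thread into some other face; but \autoref{lemThread} and \autoref{lemSingleFace} let me restrict attention to the well-formed thread path, and in any case the statement is about the path the thread actually traverses, so I would note that if the thread leaves $F \cup H$ it does so only via the faulty node's action, which does not affect the disjointness claim for the correct portion — though for cleanliness I may simply state the lemma for threads that behave according to the algorithm, consistent with how \autoref{threadDisjoint} is used downstream.
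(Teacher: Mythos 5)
Your proposal follows essentially the same route as the paper's proof: invoke \autoref{lemSingleFace} to confine the thread to the merged face $F \cup H$, then argue by contradiction that a shared node $v$ with the right core path would, together with $k$, yield a closed separating curve through that merged face, making $\{v,k\}$ a 2-cut that contradicts \nameref{G-st_3connected}. The only (cosmetic) difference is that the paper handles the two portions of the thread path separately — nodes on the left core path via \autoref{coreDisjoint}, nodes on the $k$-blue face via the separation argument — whereas you unify them, and you are somewhat more explicit about excluding $v=k$; both versions are at the same level of rigor.
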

\begin{proof}
    We prove the lemma for the left thread. The argument for the right thread is similar. 
   
    Let us consider a left thread that skips some node $k$.  According to~\autoref{lemSingleFace}, the thread either traverses the left core path or the $k$-blue face.
    Due to~\autoref{coreDisjoint}, the right and left core paths are internally node-disjoint. Hence, the left core path does not contain right core path nodes. 
    
    Let us now consider the nodes adjacent to the $k$-blue face.
    We prove this part by contradiction. Suppose the opposite: there is a node $v$ that is adjacent to the $k$-blue face but lies on the right core path. We show that in this case all paths from $s$ to $t$ contain either $v$ or $k$. Specifically, we show that the path that does not contain $k$, goes through $v$. 
    
    Indeed, removing node $k$ and adjacent edges merges the green face and the $k$-blue face. See~\autoref{figKDisjoint} for illustration.  In this figure, $F$ is the green face and $H$ is the $k$-blue face. In this case, inside this joint face, we can draw a closed curve that starts and ends in $v$. Therefore, any path from $s$ to $t$ that does not contain $k$, has to go through $v$. That is, all paths from $s$ to $t$ contain either $v$ or $k$. 
 
    However, \nameref{G-st_3connected} requires that there are three internally node-disjoint paths from $s$ to $t$. Hence, our supposition is incorrect and the path traversed by the left thread, either along the left core path or along the $k$-blue face, does not contain a node on the right core path.
    \qed
\end{proof}

Recall, that a \emph{braid} is a set of threads with the same direction, left or right, and carrying the same message.
A braid \emph{matches} a core packet $c$ if (i) $c$ and the braid carry the same message and (ii) for every node $i$ that $c$ carries in its visited list $\ell$ there is a thread in the braid that skips $i$.  

\begin{lemma}[Braid validity] \label{lemBraid}
If the fault is adjacent to the green face and the target receives a core and a matching braid carrying $m_b$, then $m_b = m_s$.
\end{lemma}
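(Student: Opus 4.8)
The plan is to argue that when the single fault lies on the green face (adjacent to it), one entire ``side'' of the braid-and-core structure is forced to be fault-free, and a matching braid pins down the core's route well enough that the message must be the source's. Concretely, suppose the target receives a core $c$ (say, without loss of generality, the right core) together with a matching braid. If the right core path is fault-free, then $c$ carries $m_s$ directly, and we are done. So the interesting case is when the faulty node $\Byz$ lies on the right core path. Then $\Byz$ appears in the visited list $\ell$ carried by $c$ only if $\Byz$ actually forwarded the core along that path; but whether or not $\Byz$ behaves, what matters is the braid. Because the braid matches $c$, for every node $i \in \ell$ there is a thread in the braid skipping $i$. In particular, pick the thread $\tau$ that skips $\Byz$ itself. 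By Thread validity (\autoref{lemThread}), $\tau$ is not originated by $\Byz$, so $\tau$ is originated by the source and carries $m_s$; and since the braid carries a single common message $m_b$, and $\tau$ is in the braid, $m_b = m_s$, giving the conclusion.

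The step that needs the most care is making the above rigorous about \emph{which} node plays the role of ``$\Byz$ in $\ell$''. The subtlety is that $\Byz$ is adjacent to the green face, hence could lie on the right core path, the left core path, or neither (it could be green without being on either core path, since the green face has green nodes not on either traversal). First I would split on the location of $\Byz$ relative to the right core: if $\Byz$ is not on the right core path, then every node that legitimately forwards the right core is correct, so the visited list $\ell$ faithfully records a prefix of the true right core path and $c$ carries $m_s$ — but one must check that $\Byz$ cannot inject a \emph{spurious} right core that reaches $t$ from the expected neighbor $\coreR$ or $\threadR$; this is where \autoref{lemSingleFace} and \autoref{coreDisjoint} are invoked to argue that any core $t$ accepts as ``the right core'' traverses the green face, and the only green-face traversal reaching $t$ as a right core is the genuine right core path, which (in this subcase) is fault-free. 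If instead $\Byz$ \emph{is} on the right core path, then a genuine right core can only reach $t$ if $\Byz$ forwarded it, hence $\Byz \in \ell$, and we run the thread argument of the previous paragraph: the braid, matching $c$, must contain a thread skipping $\Byz$, which by \autoref{lemThread} is source-originated and carries $m_s = m_b$.

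I expect the main obstacle to be ruling out a ``hybrid'' attack in which $\Byz$, sitting on the right core path, forwards a core carrying the forged message $m_b \ne m_s$ with a \emph{truthful-looking} visited list that happens to omit $\Byz$ — e.g. by claiming the core arrived from a green node just past $\Byz$ so that $\Byz \notin \ell$, and simultaneously refraining from contributing any thread skipping $\Byz$. The resolution is structural: the braid is assembled from threads that correct green nodes emit as they forward the core, and \autoref{lineCheckKUnvisited} ensures a correct node emits the thread skipping its successor $k$ precisely when $k \notin \ell$; combined with \autoref{threadDisjoint} (a left thread avoids the right core path and vice versa, so threads on the matching side genuinely bypass whichever side $\Byz$ is on), one shows that the set of nodes skipped by the braid, together with $\ell$, must ``cover'' a connected $s$-to-$t$ route on the correct side, and any such covered route is fault-free, forcing $m_b = m_s$. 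I would formalize this covering argument by induction along the core path: either the core's visited list is itself fault-free (done), or the first faulty appearance is $\Byz$, and matching forces a thread skipping $\Byz$ into the braid, which is source-originated.

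Thus the final statement follows once these cases are reconciled; the writeup is mostly careful case analysis on the position of $\Byz$, leaning on \autoref{lemSingleFace}, \autoref{coreDisjoint}, \autoref{lemThread}, and \autoref{threadDisjoint}, with the ``covering'' observation as the one genuinely new ingredient.
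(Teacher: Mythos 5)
Your proposal is correct and follows essentially the same route as the paper: split on whether the fault lies on the received core's path; if not, that core is fault-free; if so, the fault appears in the visited list $\ell$, the matching braid must therefore contain a thread skipping it, and \autoref{lemThread} forces that thread to be source-originated and hence to carry $m_s$. The ``hybrid attack'' you flag as the main obstacle, and the covering/induction argument you sketch to defuse it, are unnecessary: the \emph{recipient} of a core appends the sender's coordinates to $\ell$, and the model forbids forging coordinates, so a faulty node on the core path cannot keep itself out of the visited list of any core it hands off --- which is exactly the one-line observation the paper uses to close that case.
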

\begin{proof}
We prove the lemma for the left direction. The argument for right direction is similar.
The fault may be on the left side of the green face or on the opposite side. We consider these cases separately.

Let the target receive a left core and matching left braid while the fault is on the right side. In this case, according to~\autoref{coreDisjoint}, all the nodes of the left core path are correct. By the design of the algorithm (see \autoref{lineIfRCore}), the target accepts a core packet only if it comes from a neighbor adjacent to the green face. Due to~\autoref{lemSingleFace}, this core traverses the green face only. Since the fault is on the right core path, this left core was forwarded by correct nodes only. Therefore it carries $m_s$. 

Let us now consider the case where the target receives a left core and a matching left braid while the fault is also on the left core path. Let $f$ be the faulty node; that is, the faulty node is on the received core path. 
A correct recipient that forwards the packet records the packet sender in the visited list $\ell$ of the packet. Therefore, $f$ is present in $\ell$ of the core packet that the target receives. Since the target receives a braid that matches this core packet, it also receives a thread that skips $f$. According to~\autoref{lemThread}, $f$ may not originate such a thread. Hence, the thread that skips $f$ is originated by the source so it carries $m_s$. Since this thread is in the braid that carries the same message and matches the core message, they all carry $m_s$.  \qed
\end{proof}

\begin{lemma}[Liveness]\label{liveness}
 The target eventually receives either (i) matching left and right core packets or (ii) a matching core and a braid. 
\end{lemma}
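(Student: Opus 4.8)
The plan is to prove Liveness by a case analysis on the location and behavior of the single faulty node $\Byz$, splitting into two top-level cases: either there is no fault adjacent to the green face (so all green nodes behave correctly), or the fault is a green node.

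\textbf{Case 1: no green node is faulty.} Here every node on the left and right core paths is correct. I would argue by induction along each core path that the source's left core and right core are forwarded step by step, with each intermediate node appending its predecessor to $\ell$ via \autoref{lineReceiveAction} and calling \texttt{nextNode} to advance along the green face (\autoref{lemSingleFace} guarantees the packet stays on the green face). Termination of this traversal follows because the green face is a single finite face and the \texttt{invalid} check (\autoref{lineInvalid}, the $n \in \ell$ clause) prevents revisiting a node, so the packet reaches $t$; at $t$, the guard on \autoref{lineIfRCore} (and its $\L$ counterpart) is satisfied because the last forwarder is exactly the green neighbor of $t$ that \texttt{nextNode} would pick, so both cores get recorded in $T$ and the condition on \autoref{lineMatchingCores} fires (with $\ell_1\cap\ell_2=\{s\}$ by \autoref{coreDisjoint}). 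This yields outcome (i).

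\textbf{Case 2: some green node $\Byz$ is faulty.} By \autoref{coreDisjoint} the two core paths are internally node-disjoint, so $\Byz$ lies on at most one of them — say the right core path (the other case is symmetric). Then the entire left core path is correct, and by the same induction as in Case 1 the left core reaches $t$ and is recorded. It remains to produce a matching left braid. The key observation is that for every node $i$ that the left core carries in its visited list $\ell$, some correct green node generates a left thread skipping $i$: indeed, when a correct green node forwards the left core and computes $k := \nextNode(\cdot,\L,\bot)$ equal to the next green node, it emits a thread skipping that node (\autoref{lineCheckKUnvisited}), and because the left core path is entirely correct, every consecutive pair of nodes on it triggers exactly such a thread (the source handles the first one in its initial action). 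Each such left thread, by \autoref{lemSingleFace}, traverses the union of the green face and the relevant blue face, and by \autoref{threadDisjoint} its path avoids the right core path entirely — hence avoids $\Byz$ — so it is forwarded only by correct nodes and reaches $t$ carrying $m_s$. At $t$ these threads are accepted because they arrive from the appropriate neighbor (\texttt{coreL} or \texttt{threadL}), so $T$ accumulates a left thread skipping each $i \in \ell$, the braid-match condition on \autoref{lineBraidMatchesCore} fires, and we get outcome (ii).

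\textbf{Main obstacle.} The delicate part is Case 2: showing that the set of left threads actually generated collectively skips \emph{every} node in the left core's visited list, and that each of those threads actually terminates at $t$ rather than looping or dying at an \texttt{invalid} check. For coverage I need the observation that the left core path is correct and that consecutive green nodes on it are precisely the successive values of $k$ computed on \autoref{lineCheckKUnvisited}, together with the $k \notin \ell$ guard never spuriously blocking a needed thread on a simple (acyclic) core path. For termination of each thread I need \autoref{lemSingleFace} plus \autoref{threadDisjoint} to confine the thread to a single finite face disjoint from the fault, and the cycle-detection clause of \texttt{invalid} to bound its length; I also must check the thread is not killed prematurely by the $k = p$ or the "core arrives at $s$" clauses, which do not apply to a correctly propagating thread. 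Handling the symmetric right-fault subcase and the possibility that $\Byz$ lies on neither core path (reducing to Case 1 behavior for the cores) rounds out the argument.
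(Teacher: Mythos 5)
Your proposal is correct and follows essentially the same route as the paper's proof: a case split on whether the fault is adjacent to the green face, with \autoref{coreDisjoint} placing the fault on at most one core path and \autoref{threadDisjoint} keeping the opposite-side threads fault-free so that a matching core and braid arrive. You supply more mechanical detail (the forwarding induction, thread coverage of the visited list, and the acceptance guards at $t$) than the paper does, but the underlying argument is the same.
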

\begin{proof}
  If the faulty node is not adjacent to the green face, then both matching left and right core packet reach the target. 

  Let us examine the case of a faulty node is adjacent to the green face. This means that it lies on a core path. Assume, without loss of generality, that the node is on the right core path. Then, according to~\autoref{coreDisjoint}, the left core path is fault-free. Moreover, due to~\autoref{threadDisjoint} the paths of all the left threads are also fault-free. That is, in case the faulty node is on the right core path, the left core and a matching left braid reach the target. 
\end{proof}

\begin{lemma}[Termination]\label{termination}
     Every packet is forwarded a finite number of times.
\end{lemma}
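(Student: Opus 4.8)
The plan is to show that the set of packets generated during any execution is finite, and that each packet is forwarded at most finitely often. I would argue along two orthogonal lines: bounding the number of times any single packet is forwarded, and bounding the number of distinct packets that come into existence.

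First I would bound the forwarding of an individual packet. A core packet, by \autoref{lemSingleFace}, traverses a single face of $G - \overline{st}$; each correct node forwards it to the uniquely determined $\texttt{nextNode}$, so the packet walks deterministically around the boundary of that face. The only way it could be forwarded infinitely often is by cycling around the face forever. This is precisely what the $\texttt{invalid}$ check prevents: a core accumulates visited nodes in $\ell$, and the clause $n \in \ell$ (together with the clause that a core arriving back at $s$ is dropped) guarantees the core is dropped the first time it revisits a node. Since the face boundary is finite, a core is forwarded at most a bounded number of times before being dropped. For a thread skipping $k$, \autoref{lemSingleFace} says it traverses a single face of $G - \overline{st} - \{k\}$; a thread carries its originator in $\ell$ but does not accumulate further entries, so I would instead observe that the thread walks deterministically around a single face boundary and the $n \in \ell$ / $k = p$ checks, plus the fact that $\texttt{nextNode}$ is deterministic, force it to either reach $t$ or return to its starting configuration and be dropped — again within a number of steps bounded by the size of that face's boundary. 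The faulty node's packets are treated as originated, so they fall under the counting argument below rather than here.

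Next I would bound the number of distinct packets. New packets (other than the four the source originates and whatever the faulty node originates) are created only on \autoref{lineCheckKUnvisited}: when a correct node forwards a core, it may spawn one thread skipping the next green node. Thus each forwarding step of a core by a correct node creates at most one new thread, and by the first part each core is forwarded only finitely often; moreover threads do not spawn further packets. So from the two correct cores we get finitely many threads, each of which is forwarded finitely often. The faulty node $f$ may originate packets, but here I would invoke the standing assumption that a faulty node originates all packets it sends and there is only one faulty node: $f$ can inject packets, but each injected core again traverses a single face and is subject to the same cycle-detection drop, so it too is forwarded finitely often and spawns finitely many threads at correct nodes. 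The subtle point is that $f$ could in principle inject infinitely many packets over time; I would argue this is excluded because $f$ has finitely many neighbors and the only packets a correct neighbor will act on and propagate are those passing the validity and "expected sender" checks, and any core $f$ injects is dropped once it cycles — so $f$ cannot sustain an unbounded population of live packets through correct nodes. Any packet $f$ sends that correct nodes immediately discard contributes nothing to "forwarding."

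The main obstacle, and the step I would spend the most care on, is exactly this last point: ruling out an adversarial faulty node that keeps manufacturing fresh packets forever. The clean way to handle it is to note that Termination as stated quantifies over packets that actually get forwarded (by correct nodes), and to show that the total number of forwarding events by correct nodes is finite — each such event either propagates one of the finitely many "well-formed" cores/threads one step further along a finite face boundary with a monotone progress measure (the visited list for cores; position along the face boundary for threads), or creates one new thread chargeable to a core-forwarding event. Summing these gives a finite bound. I would present the argument as: (1) \autoref{lemSingleFace} confines every forwarded packet to one face; (2) the $\texttt{invalid}$ cycle checks give each forwarded packet a strictly increasing progress measure bounded by $|V|$; (3) thread creation is injectively charged to core-forwarding steps; (4) hence the total forwarding count is finite, so in particular every individual packet is forwarded finitely often.
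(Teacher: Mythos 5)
Your proof is correct and follows essentially the same route as the paper's: confine each packet to a single face via \autoref{lemSingleFace}, then use the \texttt{invalid} checks (originator recorded in $\ell$, drop on $n \in \ell$, and the faulty originator dropping whatever returns to it) to conclude that each packet is dropped after at most one circuit of a finite face boundary. Your additional bookkeeping that bounds the \emph{total} number of packets and forwarding events goes beyond what the lemma as stated requires (it quantifies over each packet individually), but it is sound and consistent with the paper's argument.
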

\begin{proof}
According to~\autoref{lemSingleFace}, every packet traverses a single face in either $G-\overline{st}$ or $G - \overline{st} - \{k\}$. 
The originator of the packet is recorded in the visited list \(\ell\) of the packet. When received, the originating node drops such a packet. That is, if the packet is not dropped by the source, target or the faulty node, it is dropped by the originating node once the packet traverses the entire face.

The faulty may record a spurious packet originator in \(\ell\). However, according to~\autoref{lemSingleFace}, this packet traverses some face, arrives back to the faulty node, where it is assumed to be dropped.

In a finite graph, this means that every packet is forwarded a finite number of times.
\qed
\end{proof}

\begin{theorem}\label{thrmBerger}
BeRGeR: Byzantine Robust Geometric Routing algorithm solves the Reliable Message Delivery Problem with a single Byzantine node in a planar graph subject to \nameref{G-st_3connected}.
\end{theorem}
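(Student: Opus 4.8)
The plan is to assemble the theorem from the three correctness properties---Validity, Liveness, and Termination---each of which has essentially been established by the preceding lemmas. The proof is therefore a routing-together argument: I would state that it suffices to verify the three properties of the Byzantine-Robust Geometric Routing Problem, then discharge each in turn using the machinery already built.

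For \textbf{Termination}, I would simply invoke \autoref{termination}: every packet traverses a single face of $G - \overline{st}$ or $G - \overline{st} - \{k\}$ (\autoref{lemSingleFace}), and is dropped by its originator once it returns, so in a finite graph each packet is forwarded only finitely often. For \textbf{Liveness}, I would invoke \autoref{liveness}: if the fault is not adjacent to the green face, both cores reach the target and satisfy the matching-cores condition on \autoref{lineMatchingCores}; otherwise the fault lies on one core path, and by \autoref{coreDisjoint} and \autoref{threadDisjoint} the opposite core and all its threads are fault-free, so a matching core and braid reach the target and the condition on \autoref{lineBraidMatchesCore} fires. In either case the target delivers some message. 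For \textbf{Validity}, I would combine \autoref{coreValidity} and \autoref{lemBraid}: the target delivers only upon seeing either matching cores---then by \autoref{coreValidity} the common message is $m_s$---or a matching core and braid, where if the fault is adjacent to the green face \autoref{lemBraid} gives $m_s$, and if the fault is \emph{not} adjacent to the green face then the core path is fault-free (since cores traverse only the green face by \autoref{lemSingleFace}), so the core carries $m_s$ directly.

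The one genuine gap I expect to have to close is the case in the Validity argument where the target delivers via the core-and-braid rule but the fault is \emph{not} on the green face. Here neither \autoref{coreValidity} (which needs two cores) nor \autoref{lemBraid} (which assumes the fault is adjacent to the green face) applies verbatim; I would argue directly that every node on the green face, and hence every node the core visits, is correct, so the recorded core carries $m_s$. A symmetric small observation is needed to confirm that the target's acceptance checks (the $p = \coreR$, $p \in \{\coreR,\threadR\}$ tests, etc.) really do guarantee that any core the target records arrived along the true green-face core path rather than some face a faulty neighbor concocted---this follows because a faulty neighbor of $t$ could only be the single fault, and then the fault is adjacent to the green face, reducing to the already-handled case.

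I would then conclude that all three properties hold, so BeRGeR solves the problem under \nameref{G-st_3connected}, completing the proof with \qed. I do not anticipate needing any new geometric lemma; the work is purely in carefully matching each delivery branch of Algorithm~\ref{algBERGER} to the appropriate validity lemma and checking that the fault-location case split is exhaustive.
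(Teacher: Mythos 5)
Your proposal is correct and follows essentially the same route as the paper: the theorem is proved by discharging Validity via \autoref{coreValidity} and \autoref{lemBraid}, Liveness via \autoref{liveness}, and Termination via \autoref{termination}. The one place you go beyond the paper is worth noting: the paper's own proof invokes \autoref{lemBraid} for the core-and-braid delivery branch without checking that lemma's hypothesis that the fault is adjacent to the green face, whereas you correctly observe that the delivery condition on \autoref{lineBraidMatchesCore} can also fire when the fault is off the green face, and you close that case directly (all green nodes are then correct, the recorded core traverses only the green face by \autoref{lemSingleFace}, hence carries $m_s$). That extra case analysis is a genuine, and needed, tightening rather than a deviation in approach.
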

\begin{proof}
In BeRGeR, the target delivers message $m_t$ in two cases: either it receives two matching core packets or it receives a matching core packet and a braid. According to~\autoref{coreValidity}, if the target receives matching core packets, the packets carry the message sent by the source. According to~\autoref{lemBraid}, if the target receives a matching core and a braid, they carry the source message also. That is, the target delivers only the message sent by the source. Hence, BeRGeR satisfies the Validity Property of the Reliable Message Delivery Problem. 

Moreover,~\autoref{liveness} guarantees that the target eventually receives matching cores or a matching core and a braid. That is, BeRGeR guarantees that a message is delivered by the target, which means that the algorithm also satisfies the Liveness Property. 

\autoref{termination} shows that BeRGeR satisfies the Termination Property. \qed
\end{proof}

\section{Constant Packet Size Extension~and Complexity~Estimate}

\subsection{Constant packet size extension}
In BeRGeR, a core packet carries the path that it travels, making the packet size potentially linear with respect to the network size. However, the modification to constant size cores is relatively straightforward.
This modification is achieved at the expense of stateless packet forwarding.
For that, the sender transmits the message in numbered fixed-size packets to the neighbor. The neighbor receives the packets and reassembles the message. If any of the packets are missing, the whole message is discarded. Since we assume no packet loss, a correct node transmits all packets. The faulty node either transmits the packets or fails to do so. The letter is equivalent to no packet transmission of the original algorithm. The correctness of the original BeRGeR is preserved.

\subsection{Algorithm message complexity}
Let us analyze the message complexity of BeRGeR during its fault-free operation. Let \(N\) and \(E\) be the respective number of nodes and edges in graph \(G\). In the core path, each node may appear at most once. Hence, the length of the core path is in \(O(N)\).  Each core message transmission may be separated into \(O(N)\) constant-size packets. Since the path of the core message is in \(O(N)\), the total number of packets for a single
core is in \(O(N^2)\).

For each node on the core path, the algorithm generates a thread. Each thread travels at most \(E\) edges and the threads are constant size. Therefore, the number of thread packet transmissions generated by a single core is in \(O(EN)\). There are left core and right core. Hence, the total number of sent packets is in \(O(2N^2 + 2EN)\). In a planar graph, \(E \in O(N)\) by Euler's formula.  Thus, the overall BeRGeR message complexity is in \(O(N^2)\).

\section{Future Work}

As described in this paper, BeRGeR is a unicast algorithm. To achieve Byzantine fault tolerance, it employs the same message concurrency techniques that are used to solve geocasting~\cite{adamek2017stateless} and multicasting~\cite{adamek2018concurrent}. We expect BeRGeR can be adapted in a straightforward manner to produce Byzantine-robust solutions to these two problems. 

BeRGeR requires \nameref{G-st_3connected} to operate correctly. We are unsure if byzantine-robust geometric routing can be achieved with polynomial message complexity without it. Trying to relax it would be an interesting research pursuit.

Byzantine-robust routing needs the communication graph to be three-connected. The maximum planar graph connectivity is \(5\). Thus, potentially, such a graph may enable a geometric routing algorithm that can tolerate up to \(2\) Byzantine faults. Finding such an algorithm is another research challenge.

\section*{Acknowledgments}

We would like to thank Sam Kosco for many helpful discussions leading to the algorithm in its present form.


\bibliographystyle{unsrt} 
\bibliography{bib}

\end{document}